\newcommand{\expect}{\operatorname{E}}
\newcommand{\var}{\operatorname{var}}
\renewcommand{\eqref}[1]{(\ref{#1})}
\newcommand{\tp}{^{\top}}
\newtheorem{proposition}{Proposition}
\newtheorem{corollary}{Corollary}
\begin{document}

\begin{frontmatter}

\title{Quantile contours and allometric modelling for risk classification of abnormal ratios with an application to asymmetric growth-restriction in preterm infants}
\runtitle{Quantile contours and allometric modelling}

\begin{aug}
\author{\fnms{Marco} \snm{Geraci}\corref{}\thanksref{t1,m1}\ead[label=e1]{geraci@mailbox.sc.edu}}
\author{\fnms{Nansi S.} \snm{Boghossian}\thanksref{t1}\ead[label=e2]{nboghoss@mailbox.sc.edu}}
\author{\fnms{Alessio} \snm{Farcomeni}\thanksref{t2}\ead[label=e3]{alessio.farcomeni@uniroma1.it}}
\and
\author{\fnms{Jeffrey D.} \snm{Horbar}\thanksref{t3,t4}\ead[label=e4]{horbar@VTOXFORD.org}}

\thankstext{t1}{Corresponding author: Marco Geraci, Department of Epidemiology and Biostatistics, Arnold School of Public Health, University of South Carolina, 915 Greene Street, Columbia SC 29209, USA. \printead{e1}}

\thankstext{m1}{Supported by an ASPIRE grant from the Office of the Vice President for Research at the University of South Carolina and by the National Institutes of Health -- National Institute of Child Health and Human Development (Grant Number: 1R03HD084807-01A1).}

\runauthor{Geraci et al}

\affiliation{University of South Carolina\thanksmark{t1}, Sapienza--University of Rome\thanksmark{t2}, University of Vermont\thanksmark{t3}, and Vermont Oxford Network\thanksmark{t4}}

\end{aug}

\begin{abstract}
\quad We develop an approach to risk classification based on quantile contours and allometric modelling of multivariate anthropometric measurements. We propose the definition of \textit{allometric direction} tangent to the directional quantile envelope, which divides ratios of measurements into half-spaces. This in turn provides an operational definition of directional quantile that can be used as cutoff for risk assessment. We show the application of the proposed approach using a large dataset from the Vermont Oxford Network containing observations of birthweight (BW) and head circumference (HC) for more than 150,000 preterm infants. Our analysis suggests that disproportionately growth-restricted infants with a larger HC-to-BW ratio are at increased mortality risk as compared to proportionately growth-restricted infants. The role of maternal hypertension is also investigated.
\end{abstract}

\begin{keyword}[class=MSC]
\kwd[Primary ]{62H99}
\kwd[; secondary ]{62J99}
\end{keyword}

\begin{keyword}
\kwd{brain-sparing}
\kwd{child health}
\kwd{growth restriction}
\kwd{isoprobability contours}
\kwd{principal component analysis}
\end{keyword}

\end{frontmatter}

\section{Introduction}
\label{sec:1}
The remarkable works on anthropometry by Adolphe Quetelet and Sir Francis Galton in the 19th century gave birth to a new field of scientific investigation within which the medical and statistical sciences developed a long-lasting and profitable collaboration. In turn, this has given rise to countless research studies in public health and to the development of important analytic methods. For example, the body mass index (BMI), also known as the \emph{Quetelet index}, is universally applied by researchers and clinicians to classify individuals into categories such as `underweight', `overweight', and `obese' as these may be at higher risks of poorer health outcomes. Classification thresholds for these categories are defined as percentiles of the BMI distribution in a reference population and are published by public health institutes like the World Health Organization (\url{http://www.who.int/childgrowth/standards/en/}).

The BMI index is a well-known example of ratio of anthropometric variables (mass/height$^2$). Another similar ratio is the corpulence or Rohrer index (mass/height$^3$). Indeed, there exists a plethora of indices where body measurements are combined as ratios, often upon power transformations, where the scaling depends on the relationship between the variables involved in the ratio. If the correct scaling exponent is applied, then no residual association should be observed between the index and the scaling variable \citep{Heymsfield2007}. Otherwise, differential misclassification bias may result when the goal is to assess risk in different categories of the anthropometric index \citep{DPCG2005}.

In this paper, we are specifically interested in anthropometric measurements for very preterm infants (22 to 29 weeks' gestation). Preterm babies, particularly those born at lower gestational ages, have high risks of mortality, morbidities, and neurodevelopmental impairment \citep{Stoll2010,Horbar2012}. For example, it is estimated that at 22 and 23 weeks' gestation the mortality rate can be as high as 80\% \citep{Stoll2010}. At these ages, there are significant rates of respiratory distress syndrome (94\%), patent ductus arteriosus (46\%), severe intraventricular hemorrhage (16\%), necrotizing enterocolitis (11\%), and late-onset sepsis (36\%) \citep{Stoll2010}.

Preterm birth is not the only risk factor. Growth restriction, usually defined as birthweight (BW) less than the 10th percentile for gestational age (GA)---or small for gestational age (SGA)---further raises already high risks among preterm infants \citep{Bernstein2000} and, hence, is used as an indicator for secondary and tertiary prevention of mortality and adverse outcomes. The etiology of SGA is multifactorial with some causes linked to, for example, smoking, placental insufficiency, environmental factors, and maternal complications like preeclampsia. These factors not only impact BW but might also affect the size of the head (as measured by head circumference, HC, right after birth), with consequences that may vary according to the particular period of pregnancy in which the insult has occurred. It has been theorized that: if the insult occurs early during pregnancy or even before conception, growth restriction is symmetric (or proportional) and both BW and HC are affected; if the insult occurs later during pregnancy, growth restriction is asymmetric (or disproportional), with negative consequences mainly for BW, thus resulting in a larger HC-to-BW ratio \citep{Vandenbosche1998,Saleem2011}. Asymmetric growth is thought to be an adaptive mechanism that is put in place to protect the brain. In response to placental insufficiency, which is often caused by hypertension and leads to intrauterine growth restriction, the fetus adapts its circulation to preserve oxygen and nutrient supply to the brain (the `brain-sparing' effect). Some studies investigated the determinants of fetal growth and body proportionality, as well the effect of the latter on neonatal outcomes. They found that (severe) pregnancy-related hypertension, which is the development of new hypertension after 20 weeks' gestation, is strongly associated with a larger HC-to-BW ratio \citep{Kramer1990a}. The latter, in turn, was found to be a risk factor for stillbirth and fetal distress \citep{Kramer1990b}.

\begin{figure}
\includegraphics[scale = .40]{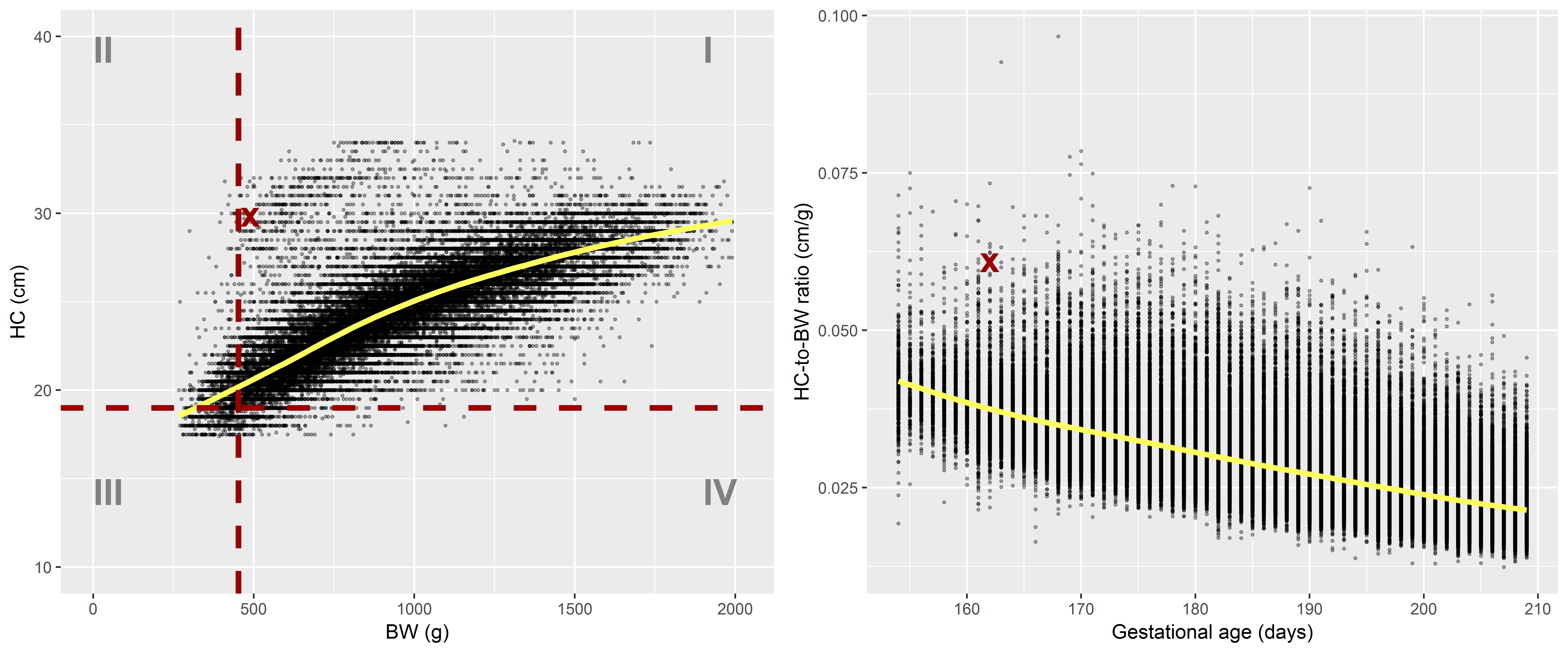}
\caption{Left: head circumference (HC) vs birthweight (BW). The horizontal and vertical dashed lines mark, respectively, the estimated 10th percentile of HC and BW at age 162 days, and divide the plane into four quadrants numbered I to IV. Right: HC-to-BW ratio vs gestational age. Superimposed solid lines represent smoothing splines predictions.\label{fig:1}}
\end{figure}

Thus, it is natural to wonder whether there are differences in terms of health outcomes between infants with unusual HC-to-BW ratio and those with normal HC-to-BW ratio. The abnormality of a ratio reflects the abnormality of either the numerator or denominator, relative to the other. This is illustrated in Figure~\ref{fig:1} which shows the relationship between BW and HC in preterm infants using data from the Vermont Oxford Network (VON), a large network of medical centers. The observation marked by a thick cross in the left plot of Figure~\ref{fig:1} (a girl born at 162 days of gestation) presents an abnormal HC-to-BW ratio for her GA (right plot), a possible consequence of the brain-sparing effect. Note that, individually, the BW and HC measurements for this girl do not present a concern since they both lie above the respective cutoffs for small BW and small HC.

The characterization of abnormal bivariate observations require the application of multivariate approaches to joint ranking. Multivariate modelling has a long tradition in parametric statistics. Models for joint distributions are, more or less, direct extensions of well-known univariate distributions to higher dimensions. The multivariate normal distribution is, among others, often invoked for its mathematical and statistical properties. In the past few years, there has been a growing number of applications giving stronger attention to distributions that can flexibly account for heavier tails \citep{Goodman1973}, skewness \citep{Azzalini1996,Kozubowski2000,Kozubowski2013} and, more in general, to non-elliptical distributions. Rather than assuming a specific parametric distribution, we take a more agnostic approach and we propose to investigate these issues by using directional quantile envelopes (DQEs) for multivariate data \citep{Kong2012}.

The interpretation of a DQE is straightforward. In a bivariate plot, a DQE is represented by a contour line with constant quantile level. Each point on the contour line can be mapped to a fixed percentile of the distribution of the data coordinates' projections onto the real line. The projections are obtained in every possible direction on the circle. The key idea is to divide observations in two groups: those that lie within the contour line (jointly normal) and those outside (jointly abnormal). However, in our specific application investigating the brain sparing effect, the set of all data points classified as jointly abnormal is unsatisfactory as it comprises a clinically heterogeneous mix of infants. In particular, it does not distinguish between infants with symmetric growth and those with asymmetric growth. Rather, we want to focus on infants that, on the one hand, have jointly abnormal measurements, but, on the other, have a large HC-to-BW ratio. This is equivalent to choosing a directional quantile in a particular direction. We propose using allometry to determine such a direction.

Besides the study of \cite{Kong2012}, there are other precedents of applications of methods for multivariate quantiles to anthropometric and growth charts. Some authors \citep{McKeague2011} proposed quantile contours based on Tukey's notion of halfspace depth, while others \citep{Wei2008} considered directional reference intervals built around a central point of the distribution (location parameter). All these studies offer nonparametric approaches to identify jointly abnormal measurements when parametric (normality) assumptions are inappropriate. Therefore, we do not claim any novelty in this regard. In contrast, our focus is on the classification of abnormal ratios. Moreover, we bring forward a connection between DQEs \citep{Kong2012} and allometric modelling which, to our knowledge, does not seem to have been reported before. This in turn provides an operational definition of directional quantile that can be used as cutoff for risk assessment, as demonstrated in the analysis of the VON data.

The rest of the paper is organized as follows. In Section~\ref{sec:2}, we give a general overview of VON and some details about the variables of interest. In Section~\ref{sec:3.1}, we briefly discuss quantiles for univariate data and their limitations when used as individual cutoffs in multivariate problems. In Section~\ref{sec:3.2}, we provide the formal definition of DQE and introduce the relationship between directional quantiles and ratios. In Section~\ref{sec:3.3}, we establish the connection between DQEs and allometry. In Section~\ref{sec:4}, we apply the proposed methods to the VON data to investigate mortality risk in proportionately and disproportionately growth-restricted infants. We show that our principled approach stratifies mortality risk more effectively as compared to the commonly-adopted approach based on isometric scaling. Although our methods focus on anthropometry and allometry, extensions and generalizations to other settings are discussed in Section~\ref{sec:5}.

\section{The Vermont Oxford Network}\label{sec:2}
The Vermont Oxford Network (VON) is a nonprofit, voluntary collaboration of health care professionals `dedicated to improving the quality and safety of medical care for newborn infants and their families through a coordinated program of research, education, and quality improvement projects' (\url{https://public.vtoxford.org/about-us/}). The Network was established in 1988 and comprises over 1200 centers (hospitals) with a neonatal intensive care unit (NICU). The Very Low Birth Weight Database collects information from these centers which account for approximately $90\%$ of all the births occurring at 22-29 weeks of gestation in the United States (US). A number of variables are collected from each center: maternal characteristics (e.g., ethnicity), infant characteristics (e.g., sex, gestational age, birthweight, head circumference, birth defects), and newborn health outcomes (in-hospital mortality and major morbidities). Member hospitals collect the data using uniform definitions through medical record abstraction which are then submitted to VON electronically or through paper forms. Data pass automated checks and are returned for correction if needed.

\begin{table}[ht]
\caption{Sample size ($N$), median birthweight (BW, grams), and median head circumference (HC, centimeters) of infants in the VON dataset.\label{tab:1}}
\centering
\begin{tabular}{lrrrrrr}
  \toprule
  & \multicolumn{3}{c}{\textit{Females}} & \multicolumn{3}{c}{\textit{Males}}\\
  \midrule
  Gestational age & $N$ & BW & HC & $N$ & BW & HC \\
  \midrule
  (21,23] & 5,496 & 550 & 20.5 & 6,036 & 580 & 21.0 \\
  (23,25] & 17,398 & 680 & 22.0 & 19,305 & 730 & 22.5 \\
  (25,27] & 22,448 & 890 & 24.0 & 24,996 & 950 & 24.5 \\
  (27,29] & 28,097 & 1,145 & 26.0 & 31,970 & 1,220 & 26.5 \\
   All ages & 73,439 & 880 & 24.0 & 82,307 & 945 & 24.5 \\
   \bottomrule
\end{tabular}
\end{table}

Our study sample was restricted to inborn, singleton US infants born at 22 to 29 weeks of gestation between 2006 and 2014, with no congenital malformations. GA was determined using obstetrical measures based on prenatal ultrasound (accuracy $\pm7$ days), last menstrual period (accuracy $\pm14$ days), or a neonatologist's estimate based on postnatal physical examinations (accuracy $\pm13$ days for Dubowitz examination). BW was recorded from labor and delivery or, if unavailable, upon admission to the neonatal unit. HC was recorded on the day of birth or the day after. The data underwent some mild cleaning procedures as described elsewhere \citep{Boghossian2016}. In particular, we excluded infants with missing information on vital status (841), unknown gender (30), missing (71) or implausible (744) BW, missing (24,706) or implausible (860) HC, missing hospital length-of-stay (78), or who were hospitalized for longer than one year (565). Overall, about $15\%$ infants were excluded, leaving 155,746 infants for our analysis. Table~\ref{tab:1} gives summary statistics of the sample by sex and four gestational age intervals: (21,23], (23,25], (25,27], and (27,29] weeks. We have divided gestational age in intervals mainly for practical reasons as we want to provide a summary of the results which can be readily used by public health and clinical practitioners. Each gestational age group comprises infants whose mortality risk is comparable within that interval. The sample size of the first interval is noticeably smaller than the other three (Table~\ref{sec:1}), but this interval stands out because it includes infants with the highest risk of mortality \citep{Boghossian2018b}. These data were used in previous publications \citep{Boghossian2016,Boghossian2018} to generate BW- and HC-for-gestational-age percentile charts for clinical use.

\section{Quantile-based risk classification}\label{sec:3}
\subsection{Quantiles of univariate data}\label{sec:3.1}
The quantile function (QF) of a random variable $Y$ with cumulative distribution function (CDF) $F_{Y}(y)$ is defined as
\begin{equation}
\label{eq:1}
Q_{Y}(p)  = \inf \{y \in \mathbb{R}: F_{Y}(y) \geq p\}, \qquad 0 < p < 1.
\end{equation}
Here we assume that $Y$ is absolutely continuous with probability density function $f_{Y}(y) > 0$ over the support of $Y$. Therefore the QF is simply the inverse of the CDF, $Q_{Y}(p) \equiv F^{-1}_{Y}(p)$.

In the presence of covariates, the QF can be extended to conditional distributions. The linear specification of the QR model is \citep{Koenker1978}
\begin{equation}
\label{eq:2}
Q_{Y|\mathbf{X}}(p) = \mathbf{x}\tp \bm\beta(p),
\end{equation}
where $\mathbf{X}$ is a $q$-dimensional vector and $\bm\beta(p)$ is a vector of $q$ coefficients indexed by the quantile level $p$. A generalization of \eqref{eq:2} defines
\begin{equation}
\label{eq:3}
Q_{Y|\mathbf{X}}(p) = h\{\mathbf{x}\tp \bm\beta(p)\},
\end{equation}
where the transformation $h$ can be modelled either parametrically or nonparametrically. Moreover, if $h$ is monotone, then $Q_{h^{-1}(Y)|\mathbf{X}}(p) = \mathbf{x}\tp \bm\beta(p)$, which we call transformation rule \citep{Gilchirst2000} (also known as equivariance to monotone transformations). It is worth mentioning here that quantiles enjoy a number of other properties \citep{Gilchirst2000}, including the reflection rule $Q_{-Y|\mathbf{X}}(p) = -Q_{Y|\mathbf{X}}(1-p)$.

In clinical settings, it is customary to define a cutoff for abnormal measurements. We will use the terms `normal', `subnormal' (below normal), `supranormal' (above normal), and `abnormal' (not normal, either below or above) to classify observations based on arbitrary cutoffs but without giving them any clinical or diagnostic connotation. Cutoffs are often related to specific quantiles of the distribution. For example, infants are classified as SGA if their BW is below the 10th percentile of the BW distribution conditional on gestational age; otherwise, they are termed appropriate for gestational age (AGA). Assuming a model as in \eqref{eq:3}, the cutoff would be determined as
\begin{equation*}
Q_{\mathrm{BW}|\mathrm{GA}}(0.1) = h_{\mathrm{BW}}\{\beta_{0}(0.1) + \beta_{1}(0.1) \cdot \mathrm{GA}\},
\end{equation*}
for some suitable transformation $h_{\mathrm{BW}}$ \citep{Geraci2015,Boghossian2016}. Similarly, infants are said to have a subnormal head size if their HC is below the 10th percentile of the HC distribution conditional on GA, that is
\begin{equation*}
Q_{\mathrm{HC}|\mathrm{GA}}(0.1) = h_{\mathrm{HC}}\{\beta_{0}(0.1) + \beta_{1}(0.1) \cdot \mathrm{GA}\},
\end{equation*}
assuming, as before, that model \eqref{eq:3} holds for some transformation $h_{\mathrm{HC}}$ \citep{Geraci2015,Boghossian2016}. Clearly, these cutoffs need be estimated, either externally using a representative sample from the standard or referent population of interest, or internally from within the same data.

Most of the times, BW and HC are treated separately in statistical analyses. For example, they are analyzed as separate outcomes or as `independent' predictors of postnatal child outcomes. However, there are good reasons why it could be informative to analyze these variables \emph{jointly}. First of all, BW and HC are necessarily related. Larger weights correspond to larger head circumferences, although the younger the baby the larger the head size in relation to the size of body. In other words, the HC-to-BW ratio decreases with age (Figure~\ref{fig:1}). Secondly, health outcomes may differ among infants whose BW and HC are ranked jointly normal or abnormal, as those with symmetric and asymmetric growth. Joint ranking necessitates multivariate approaches.

\subsection{Quantiles of multivariate data}\label{sec:3.2}
Let $\mathbf{Y} = (Y_{1}, \ldots, Y_{K})\tp$ denote a multivariate random vector collecting measurements for $K$ continuous variables (e.g., BW and HC) and let $\mathbf{d} = (d_{1}, \ldots, d_{K})\tp$ be a normalized (with unit norm) direction of dimension $K$. The $p$th directional quantile, in the direction $\mathbf{d}$, is the $p$th quantile of the corresponding projection of the distribution of $\mathbf{Y}$, that is $Q_{\mathbf{d}\tp \mathbf{Y}}(p)$. The supporting half-space determined by $\mathbf{d}$ is $H(\mathbf{d},\xi) = \{\mathbf{y}: \mathbf{d}\tp \mathbf{y} \leq \xi\}$. The $p$th directional quantile envelope (DQE) generated by $Q_{\mathbf{d}\tp \mathbf{Y}}(p)$ is given by the intersection \citep{Kong2012}
\begin{equation}
\label{eq:4}
D(p) = \bigcap_{\mathbf{d}} H\left(\mathbf{d},Q_{\mathbf{d}\tp \mathbf{Y}}(p)\right), \qquad 0 < p \leq 0.5.
\end{equation}

In a bivariate space ($K = 2$), the geometric intuition behind \eqref{eq:4} is as follows. Consider a scatter of points as that on the left plot in Figure~\ref{fig:1}; fix $p$ equal to 0.1; and define a direction on the circle (for example, the west-east or south-north direction). Next, we cumulate data points while moving along the chosen direction and we stop when the cumulative proportion is $10\%$. We demarcate a line which divides the plane into two half-planes, a `lower' half-plane with $10\%$ of the points, and an `upper' half-plane with the remaining $90\%$. If we repeat this process for all the possible directions on the circle, then the intersection of all the demarcation lines defines an oval-shaped contour within which the data points belong to the upper half-planes in \emph{all} directions (an illustration is given in Figure~\ref{fig:2}). These data points represent the set $D(0.1)$. Similarly, the points outside the perimeter, which we denote by $\overline{D(0.1)}$, belong to the lower half-planes in \emph{some} directions.

From the above exemplification it becomes clear that the proportion of points that are in $D(p)$ will be less than $(1-p)$ since these points satisfy $\mathbf{d}\tp \mathbf{y} \geq Q_{\mathbf{d}\tp \mathbf{Y}}(p)$ for all $\mathbf{d} \in \mathbb{R}^{K}$. It also becomes clear that the quantile $p$ in a given direction is equivalent to the quantile $1 - p$ in the opposite direction, e.g., the $10$th directional percentile in the south-north direction is equivalent to the $90$th directional percentile in the north-south direction (hence, $0 < p \leq 0.5$).

Directional quantiles can be easily extended to conditional distributions. If we assume a linear model as in \eqref{eq:2}, we obtain
\begin{equation}
\label{eq:5}
Q_{\mathbf{d}\tp \mathbf{Y}|\mathbf{X}}(p) = \mathbf{x}\tp \bm\beta(p).
\end{equation}
Then the DQE \eqref{eq:4} can be applied to the conditional quantiles in \eqref{eq:5}.

Model \eqref{eq:5} presupposes the additivity of the coordinates of $\mathbf{Y}$ since $\mathbf{d}\tp \mathbf{Y} = d_{1}Y_{1} + d_{2}Y_{2}$. If, instead, a multiplicative relationship is to be studied, then the logarithmic transformation of the measurements is more appropriate \citep{Kong2012}, i.e.
\begin{equation}
\label{eq:6}
Q_{\mathbf{d}\tp \mathbf{Z}|\mathbf{X}}(p) = \mathbf{x}\tp \bm\beta(p),
\end{equation}
where $\mathbf{Z} = (\log Y_{1}, \log Y_{2})\tp$. The $p$th directional quantiles of the coordinates on the log-scale therefore corresponds to the $p$th quantile of the log-ratio of the scaled coordinates, that is
\begin{equation}
\label{eq:7}
d_{1}\log Y_{1} + d_{2} \log Y_{2} = \log\left(\dfrac{Y_{2}^{d_{2}}}{Y_{1}^{-d_{1}}}\right).
\end{equation}

The ratios of the type $R = Y_{2}^{d_{2}}/Y_{1}^{-d_{1}}$ are ubiquitous in public health and clinical applications. One can immediately recognize that $R$ is the BMI index if $Y_{1}$ is height, $Y_{2}$ is weight, and $\mathbf{d} = (-2,1)\tp$. The determination of this particular direction might seem somewhat axiomatic but, far from it, it is the result of statistical observations by Quetelet himself and, some 130 years later, Ancel Keys. However, Quetelet in his treatise \citep{Quetelet1842} had already recognized that the choice of scaling was rather complex:
\begin{quote}
If man increased equally in all dimensions, his weight at different ages would be as the cube of his height. Now, this is not what we really observe. The increase of weight is slower, except during the first year after birth; [...] However, if we compare two individuals [...] we shall find that \emph{the weight of developed persons, of different heights, is nearly as the square of the stature}.
\end{quote}

Allometry studies the geometrical relationships in the human body. In the following section, we discuss the connection between allometry and DQEs. Such a connection provides an operational definition of directional quantile that can be used as cutoff for risk assessment.

\subsection{Bivariate percentiles and allometric analysis}\label{sec:3.3}
Let us consider an allometric model of the type
\begin{equation}\label{eq:8}
Y_{2} = a Y_{1}^{b},
\end{equation}
$a,b > 0$. Equation \eqref{eq:8} implies that the \textit{allometric ratio} $Y_{2}/Y_{1}^b$ is constant and equal to $a$.

In our specific application where $Y_{1}$ is BW and $Y_{2}$ is HC, the scaling exponent $b$ captures the differential growth ratio between the head and the body as a whole. For $b=1$, the variables $Y_{1}$ and $Y_{2}$ are said to be isometric. Model \eqref{eq:8}, if correctly specified, provides a benchmark against which we can classify abnormal HC-to-BW ratios. This is asserted in the following proposition which establishes the connection between \eqref{eq:4} and \eqref{eq:8}.

\begin{proposition}\label{prop:1}
Let $Y_{1}$ and $Y_{2}$ be two continuous and strictly positive random variables, and assume that the allometric model $\mathrm{Y_{2}} = a \mathrm{Y_{1}}^{b}$, $a,b >0$, holds true. Also, define $\mathbf{Z} = (\log Y_{1}, \log Y_{2})\tp$ and assume that the $p$th directional quantile envelope $D(p)$ generated by $Q_{\mathbf{d}\tp \mathbf{Z}}(p)$ is smooth. Then the lines
\begin{align}\label{eq:9}
\nonumber \log Y_{2} &= \log\{Q_{R}(p)\} + b\cdot \log Y_{1}\\
\log Y_{2} &= \log\{Q_{R}(1-p)\} + b\cdot \log Y_{1},
\end{align}
where $R = \dfrac{Y_{2}}{Y_{1}^b}$, are tangent to $D(p)$.
\end{proposition}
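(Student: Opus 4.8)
The plan is to work entirely on the log scale, writing $\mathbf{z}$ for a generic point of the log-plane and recalling from the discussion following \eqref{eq:4} that $\mathbf{z}\in D(p)$ precisely when $\mathbf{d}\tp\mathbf{z}\ge Q_{\mathbf{d}\tp\mathbf{Z}}(p)$ for every unit vector $\mathbf{d}$; in particular $D(p)$ is an intersection of closed half-planes and hence convex. The whole argument hinges on singling out the one direction that encodes the exponent $b$, namely $\mathbf{d}_{0}=(-b,1)\tp/\sqrt{1+b^{2}}$, for which $\mathbf{d}_{0}\tp\mathbf{Z}=(1+b^{2})^{-1/2}(\log Y_{2}-b\log Y_{1})=(1+b^{2})^{-1/2}\log R$ with $R=Y_{2}/Y_{1}^{b}$.

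First I would pin down the two lines. Because $w\mapsto(1+b^{2})^{-1/2}w$ and $t\mapsto\log t$ are increasing, the transformation rule gives $Q_{\mathbf{d}_{0}\tp\mathbf{Z}}(p)=(1+b^{2})^{-1/2}\log\{Q_{R}(p)\}$, and rewriting the boundary equation $\mathbf{d}_{0}\tp\mathbf{z}=Q_{\mathbf{d}_{0}\tp\mathbf{Z}}(p)$ of the half-plane $\{\mathbf{d}_{0}\tp\mathbf{z}\ge Q_{\mathbf{d}_{0}\tp\mathbf{Z}}(p)\}$ yields exactly $\log Y_{2}=\log\{Q_{R}(p)\}+b\log Y_{1}$, the first line of \eqref{eq:9}. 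For the opposite direction $-\mathbf{d}_{0}$ I would combine the reflection rule $Q_{-W}(p)=-Q_{W}(1-p)$ with the same transformation rule to get $Q_{-\mathbf{d}_{0}\tp\mathbf{Z}}(p)=-(1+b^{2})^{-1/2}\log\{Q_{R}(1-p)\}$, whose boundary equation rearranges to the second line of \eqref{eq:9}. Since both half-planes $\{\pm\mathbf{d}_{0}\tp\mathbf{z}\ge Q_{\pm\mathbf{d}_{0}\tp\mathbf{Z}}(p)\}$ occur in the intersection \eqref{eq:4}, $D(p)$ lies in the closed slab they bound, so the two (parallel, slope-$b$) lines of \eqref{eq:9} are the supporting lines of $D(p)$ with outer normals $\mathbf{d}_{0}$ and $-\mathbf{d}_{0}$.

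It then remains to turn ``supporting'' into ``tangent'', i.e.\ to show that each line actually meets $\overline{D(p)}$. This is where the smoothness hypothesis enters: when $\partial D(p)$ is a smooth, strictly convex curve it coincides with the envelope of the one-parameter family of directional-quantile lines $\{\mathbf{d}\tp\mathbf{z}=Q_{\mathbf{d}\tp\mathbf{Z}}(p)\}_{\mathbf{d}}$ \citep{Kong2012}, so every member of that family -- in particular the two lines above, which are the members indexed by $\mathbf{d}_{0}$ and $-\mathbf{d}_{0}$ -- touches $\partial D(p)$ at the unique boundary point carrying the corresponding outer normal, which is tangency. (Under the exact allometric model $R$ is degenerate at $a$, so $Q_{R}(p)=Q_{R}(1-p)=a$ and the two lines merge into the single allometric line $\log Y_{2}=\log a+b\log Y_{1}$; the proposition then reads as the statement that the DQE-based allometric direction recovers the allometric line, which is the consistency property motivating the construction.)

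I expect the tangency upgrade to be the only real obstacle. The identification of the two lines is a one-line computation using nothing beyond the definition of $R$ and the transformation and reflection rules for quantiles, and ``$D(p)$ lies in the slab'' is immediate from \eqref{eq:4}; but for a generic direction the half-plane $\{\mathbf{d}\tp\mathbf{z}\ge Q_{\mathbf{d}\tp\mathbf{Z}}(p)\}$ can be a redundant constraint in \eqref{eq:4}, and then its bounding line would miss $D(p)$ entirely. Ruling this out is exactly what smoothness buys, via the envelope/differentiability characterization of DQEs, and making that step rigorous -- while also accommodating the degenerate allometric case, in which $D(p)$ shrinks to a segment of the allometric line so that ``smooth'' has to be read in a limiting sense -- is the part that needs care.
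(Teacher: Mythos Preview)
Your proposal is correct and follows essentially the same route as the paper: single out the allometric direction (you use $\mathbf{d}_{0}\propto(-b,1)\tp$, the paper the opposite $(b,-1)\tp$), rewrite the projection as $\pm\log R$, apply the transformation and reflection rules to identify the two boundary lines with \eqref{eq:9}, and invoke \cite{Kong2012} for the geometric tangency. Your version is in fact more careful than the paper's on two points it leaves implicit --- why smoothness is needed to upgrade ``supporting'' to ``tangent'', and the degeneracy of $R$ under the exact allometric relation --- but the underlying argument is the same.
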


\begin{proof}
We only need to prove that the line tangent to $D(p)$ is of the form \eqref{eq:9}. By definition of $D(p)$, the former is given by
\[
Z_{2} = \dfrac{1}{d_{2}}Q_{\mathbf{d}\tp \mathbf{Z}}(p) - \dfrac{d_{1}}{d_{2}}Z_{1}
\]
for any given direction $\mathbf{d}$.

On the log-scale, the allometric equation given in Proposition~\ref{prop:1} relating $Y_{1}$ to $Y_{2}$ can be re-written as
\begin{equation}\label{eq:10}
b \log(Y_{1}) - \log(Y_{2})= -\log a,
\end{equation}
which has the same form of \eqref{eq:7} with $d_{1} = b$ and $d_{2} = -1$. Therefore, for $\mathbf{d} = (b,-1)\tp$, which we call \emph{allometric direction}, the line tangent to $D(p)$ is $\log Y_{2} = -Q_{\mathbf{d}\tp \mathbf{Z}}(p) + b\log Y_{1}$. Now, by \eqref{eq:7} we have that $\mathbf{d}\tp \mathbf{Z} = -\log\left(Y_{2}/Y_{1}^{b}\right) = -\log R$. Since the logarithm is a monotone transformation, we use the transformation rule introduced in Section~\ref{sec:3.1} and obtain $Q_{\mathbf{d}\tp \mathbf{Z}}(p) = Q_{-\log R}(p) = -\log\{Q_{R}(p)\}$. Then the tangent line equation becomes $\log Y_{2} = \log\{Q_{R}(p)\} + b\log Y_{1}$, which corresponds to the first equation given in \eqref{eq:9}. To obtain the second equation in \eqref{eq:9}, it is sufficient to notice that $Q_{-\log R}(p) = -Q_{\log R}(1-p) = -\log\{Q_{R}(1-p)\}$, where the first equality follows from the reflection rule and the second equality follows from the transformation rule.

For a rigorous proof of the geometric properties of $D(p)$, the reader is referred to \cite{Kong2012}.
\end{proof}

Two corollaries to Proposition~\ref{prop:1} follow.
\begin{corollary}
The tangent lines are unique.
\end{corollary}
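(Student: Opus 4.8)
The plan is to reduce the statement to an elementary fact about convex bodies in the plane. First I would note that $D(p)$ is convex, being the intersection of the half-spaces $H(\mathbf{d},Q_{\mathbf{d}\tp\mathbf{Z}}(p))$, and that the smoothness hypothesis of Proposition~\ref{prop:1} makes it a bounded convex body (an oval-shaped contour) whose boundary is a $C^{1}$ curve. For any such set the support function $\mathbf{d}\mapsto h_{D(p)}(\mathbf{d})=\sup_{\mathbf{z}\in D(p)}\mathbf{d}\tp\mathbf{z}$ is single-valued, so each direction $\mathbf{d}$ determines one and only one supporting line, namely $\{\mathbf{z}:\mathbf{d}\tp\mathbf{z}=h_{D(p)}(\mathbf{d})\}$; equivalently, a convex body with nonempty interior has exactly two supporting lines of any prescribed slope, one with each of the two opposite outer normals.

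Next I would connect these supporting lines to \eqref{eq:9}. Using the identity established inside the proof of Proposition~\ref{prop:1} (and rigorously in \cite{Kong2012}) that the half-space bound is tight in the allometric direction, i.e.\ $h_{D(p)}(\mathbf{d})=Q_{\mathbf{d}\tp\mathbf{Z}}(p)$ for $\mathbf{d}=(b,-1)\tp$ and $h_{D(p)}(-\mathbf{d})=Q_{-\mathbf{d}\tp\mathbf{Z}}(p)$, and then applying the transformation and reflection rules exactly as in that proof, the two supporting lines with outer normals $\mathbf{d}$ and $-\mathbf{d}$ come out to be precisely the two lines in \eqref{eq:9}.

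To finish, I would observe that every line of slope $b$ in the $(\log Y_{1},\log Y_{2})$-plane has normal vector parallel to $(b,-1)\tp$ or to $(-b,1)\tp$; hence any line tangent to $D(p)$ with the allometric slope must coincide with one of the two uniquely determined supporting lines above. Therefore the lines \eqref{eq:9} are the only lines tangent to $D(p)$ in the allometric direction, which is the claim.

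I expect the genuine difficulty to lie not in the convex-geometry bookkeeping but in the fact invoked from \cite{Kong2012} that the defining bounds are attained --- that $h_{D(p)}(\mathbf{d})$ actually equals $Q_{\mathbf{d}\tp\mathbf{Z}}(p)$ rather than something strictly smaller. This tightness is what the smoothness assumption, via the results of \cite{Kong2012}, delivers: without it $D(p)$ need not be the sharp envelope of the directional quantiles, and a nominal tangent line from \eqref{eq:9} could fail to touch $D(p)$ at all, rendering the uniqueness statement vacuous. Everything else is routine.
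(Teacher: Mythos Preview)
Your proposal is correct and rests on the same idea the paper uses: the paper's entire proof is the one-line remark that the corollary ``follows from the smoothness of $D(p)$,'' which is precisely the convex-body fact (unique supporting line per direction for a smooth convex set) that you spell out in detail. Your discussion of the tightness issue $h_{D(p)}(\mathbf{d})=Q_{\mathbf{d}\tp\mathbf{Z}}(p)$ and its dependence on \cite{Kong2012} is a fair elaboration of what the paper leaves implicit.
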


\begin{corollary}
The tangent half-spaces are the sets of points $\mathbf{Y} = (Y_{1},Y_{2})\tp$ such that $R < Q_{R}(p)$ and $R > Q_{R}(1-p)$.
\end{corollary}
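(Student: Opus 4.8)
The plan is to read both statements directly off the construction used in the proof of Proposition~\ref{prop:1}, using only three ingredients already available: that $D(p)$ is an intersection of closed half-spaces (hence closed and convex), that it is smooth by hypothesis, and the transformation and reflection rules of Section~\ref{sec:3.1}.

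For the first corollary I would argue purely from convexity. A closed convex set in the plane has, for each prescribed outward unit normal $\mathbf{u}$, exactly one supporting line, namely $\{\mathbf{z}: \mathbf{u}\tp\mathbf{z} = \sup_{\mathbf{w}\in D(p)}\mathbf{u}\tp\mathbf{w}\}$, since that supremum is a single number. Proposition~\ref{prop:1} has already identified the two lines in \eqref{eq:9} as the supporting lines of $D(p)$ whose normals are the allometric direction $\mathbf{d}=(b,-1)\tp$ and its opposite $-\mathbf{d}$; the intercepts $\log\{Q_{R}(p)\}$ and $\log\{Q_{R}(1-p)\}$ are determined because $R$ is continuous and strictly positive, so its quantiles are unique. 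Hence each of the two lines is the unique line of that form. Smoothness is not strictly needed for this uniqueness, but it is what guarantees that each supporting line meets $\overline{D(p)}$ at a single point, i.e. is genuinely tangent rather than flush with a flat edge, which is presumably the reading intended; I do not expect any real obstacle here beyond stating the convexity fact cleanly.

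For the second corollary I would start from the operational description of $D(p)$ recorded just after \eqref{eq:4}: a point $\mathbf{z}=(\log Y_{1},\log Y_{2})\tp$ lies in $D(p)$ if and only if $\mathbf{d}\tp\mathbf{z}\ge Q_{\mathbf{d}\tp\mathbf{Z}}(p)$ for every direction $\mathbf{d}$, in particular for $\mathbf{d}=(b,-1)\tp$ and for $-\mathbf{d}=(-b,1)\tp$. From the proof of Proposition~\ref{prop:1} one has $(-b,1)\tp\mathbf{Z}=\log R$ and $(b,-1)\tp\mathbf{Z}=-\log R$; the transformation rule applied to the increasing map $\log$ gives $Q_{(-b,1)\tp\mathbf{Z}}(p)=\log\{Q_{R}(p)\}$, while the reflection rule together with the transformation rule gives $Q_{(b,-1)\tp\mathbf{Z}}(p)=-Q_{\log R}(1-p)=-\log\{Q_{R}(1-p)\}$. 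Substituting these quantiles into the two supporting inequalities yields $D(p)\subseteq\{\log R\ge\log Q_{R}(p)\}=\{R\ge Q_{R}(p)\}$ and $D(p)\subseteq\{-\log R\ge-\log Q_{R}(1-p)\}=\{R\le Q_{R}(1-p)\}$. Since $D(p)$ is nonempty (and two-dimensional, by smoothness), the open half-space bounded by the first tangent line in \eqref{eq:9} that is disjoint from $D(p)$ is precisely $\{R<Q_{R}(p)\}$, and the one bounded by the second is $\{R>Q_{R}(1-p)\}$, which is the assertion.

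The only delicate point is the bookkeeping of the two sign changes — passing from $\mathbf{d}$ to $-\mathbf{d}$, and from $\log R$ to $-\log R$ — together with keeping straight which side of each tangent line actually carries $D(p)$; once the orientation is pinned down via the reflection rule, each corollary is a two-line verification. As a consistency check, for $p\le 0.5$ one has $Q_{R}(p)\le Q_{R}(1-p)$, so the two tangent half-spaces are disjoint from each other and from $D(p)$, matching the intended three-way split into small-ratio abnormal, large-ratio abnormal, and jointly normal observations.
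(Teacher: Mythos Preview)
Your proposal is correct and follows essentially the same approach as the paper. The paper disposes of both corollaries in two sentences---the first ``follows from the smoothness of $D(p)$'' and the second ``is a consequence of the definition of $D(p)$''---while you spell out the half-space bookkeeping via the transformation and reflection rules, which is exactly the machinery already deployed in the proof of Proposition~\ref{prop:1}; your version is simply more explicit.
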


The first corollary follows from the smoothness of $D(p)$. The second corollary is a consequence of the definition of $D(p)$. More importantly, this corollary provides the operational definition of subnormal ($R < Q_{R}(p)$) and supranormal ($R > Q_{R}(1-p)$) ratios corresponding to the allometric direction. The dashed line in the right plot of Figure~\ref{fig:2} gives an illustration of the allometric direction. One may wonder if there is anything special about this direction. The answer lies in the properties of the estimator of $b$. In particular, if $b$ is estimated using MA regression, the directional quantile in the allometric direction is in the same direction as the principal axis of the bivariate normal ellipse fitted to the log-transformed data. It is well-known that this is the direction of the first eigenvector of the variance-covariance matrix of $\mathbf{Z}$. Of course, there is nothing necessarily prescriptive about the normal distribution, so one can explore an alternative estimator for $b$ under a different distribution, if that distribution has a theoretical or empirical relevance, or use a nonparametric estimator. For example, it is common to estimate the slope of MA regression under the assumption of homoscedasticity for the log-additive counterpart of model \eqref{eq:8}. If necessary, this assumption may be relaxed to improve on accuracy and efficiency of the estimates by introducing a variance function of the type ${\var}(Y_2)=\tau^2Y_1^c$, where $\tau > 0$ and $c>0$. Alternatively, using a distribution-free approach, one could estimate $b$ by means of median regression on the log-scale \citep{Geraci2013}, with advantages in terms of robustness to outliers and error distribution, as well as in terms of lossless transformation between scales due to the equivariance property. Moreover, such estimator has a close relationship with the Laplace distribution. However, the median estimator would not be robust to measurement error in the covariates and hence would require the application of methods that are computationally more complex than MA regression \citep{Wei2012,Mao2017}.

In general, there is a stronger motivation for using the allometric direction and this is related to body proportionality in human growth assessment. If we consider studies on infants, the definition of `proportionality' varies from study to study, where body weight is sometimes related to HC, or abdomen circumference, or more commonly to length. Except for some studies (e.g., as those based on the Rohrer index \citep{Olsen2009}), several other implicitly assume isometric scaling by defining ratios of the type $R_{0} = Y_{2}/Y_{1}$ \citep{Kramer1990a,Lin1991,Williams1998,Dashe2000}. However, if the true scaling exponent $b$ is different from one, then the ratio
\[
R_{0} = a Y_{1}^{b-1}
\]
depends on $Y_{1}$. This requires that a definition of abnormal ratio should be based on the conditional quantile $Q_{R_{0}|Y_{1}}(p)$, not on the marginal $Q_{R_{0}}(p)$. The consequence of using the latter would be a misclassification of infants in categories of possibly different risks, which is the source of differential misclassification bias \citep{Heymsfield2007,DPCG2005} we referred to in Section~\ref{sec:1}. We provide empirical evidence that misclassification bias and, in turn, poorer risk classification result in the analysis of the VON data when $b$ is arbitrarily fixed equal to 1 (Section~\ref{sec:4.2}).

Our discussion so far has focused on two variables only, mainly because we are interested in asymmetric growth restriction which is commonly defined using BW and HC only \citep{Lin1991,Bocca2014,Guellec2015}, but also because the VON data do not provide anthropometric variables besides BW and HC. However, it is natural to consider a generalization for $K > 2$. One of the first problems we would encounter is, obviously, visualizing a multivariate DQE in more than, say, three or four dimensions. However, computationally \eqref{eq:4} can be applied for any $K \geq 2$. In contrast, equation \eqref{eq:8} does not seem to have an immediate multivariate counterpart and different approaches can be considered. A popular approach to multivariate allometry is based on principal component analysis \citep{Jolicoeur1963,Corruccini1983} (PCA) due to its geometric properties. In Appendix~\ref{sec:A}, we provide a generalization of Proposition~\ref{prop:1} using PCA to sketch the main idea.

\section{Mortality risk in preterm infants}\label{sec:4}
\subsection{Risk classification based on univariate percentiles}\label{sec:4.1}

As discussed in Section~\ref{sec:1}, SGA infants and infants with subnormal HC are at increased risk of poor health outcomes. We can investigate exposure-outcome associations using appropriate regression models. In particular, if we have both BW and HC as exposures, we can define a categorical variable $u$ with categories `normal BW, normal HC', `subnormal BW, normal HC', `subnormal BW, subnormal HC', and `normal BW, subnormal HC'. In Figure~\ref{fig:1}, these categories correspond to points in quadrants I, II, III, and IV, respectively. Let $W$ be the outcome of interest (death) and $\mathbf{x}$ a vector of covariates associated with $W$ (sex, gestational age, and their interaction). Using the VON data, we fitted the generalized linear model
\[
\expect(W) = \phi^{-1}\left(u, \mathbf{x}\right)
\]
with log-link function $\phi$.

Table~\ref{tab:2} shows the gestational-age-adjusted mortality risk of preterm infants born between 22 and 29 weeks. The baseline is given by infants with normal (i.e., above the 10th percentile for gestational age and sex) BW and HC. The baseline risk is $11\%$ and $14\%$ for girls and boys, respectively. The risk increases by $67\%$ (girls) or $41\%$ (boys) when only HC is subnormal and by about $80\%$ when only BW is subnormal. However, if \emph{both} HC and BW are below their respective cutoffs, then the mortality risk is approximately 2.5 times the baseline risk, meaning that 1 out of 3 of these infants does not survive. The gestational-age-specific mortality risk for girls and boys is given in Appendix~\ref{sec:B} (Tables~\ref{tab:B.1} and \ref{tab:B.2}, respectively). The relative risk for infants with subnormal BW and HC increases with gestational age, although the baseline risk of normal infants is highest at the lowest gestational ages.

\begin{table}[h!]
\caption{The gestational-age-adjusted mortality risk and $95\%$ confidence interval for infants born preterm (22 to 29 weeks) with normal birthweight (BW) and head circumference (HC) are shown in bold font. The other rows show the mortality relative risk (as compared to infants with normal BW and HC) and $95\%$ confidence interval for infants with either one or both anthropometric measurements below the univariate 10th percentile. Estimates are given by sex. The sample size is denoted by $N$.\label{tab:2}}
\centering
\begin{tabular}{llrrrr}
    \toprule
BW & HC & $N$ & Risk & Lower & Upper \\
   \midrule
   \multicolumn{6}{l}{\textit{Females}} \\
   \midrule
Normal & Normal & 63,523 & \textbf{0.11} & \textbf{0.11} & \textbf{0.11} \\
  Normal & $<$ 10th & 2,616 & 1.67 & 1.54 & 1.82 \\
  $<$ 10th & Normal & 3,155 & 1.82 & 1.69 & 1.96 \\
  $<$ 10th & $<$ 10th & 4,145 & 2.57 & 2.44 & 2.71 \\
  \midrule
  \multicolumn{6}{l}{\textit{Males}} \\
  \midrule
  Normal & Normal & 71,070 & \textbf{0.14} & \textbf{0.13} & \textbf{0.14} \\
  Normal & $<$ 10th & 3,079 & 1.41 & 1.31 & 1.52 \\
  $<$ 10th & Normal & 3,606 & 1.81 & 1.70 & 1.92 \\
  $<$ 10th & $<$ 10th & 4,552 & 2.43 & 2.32 & 2.54 \\
   \bottomrule
\end{tabular}
\end{table}

The risk categories in Table~\ref{tab:2} are defined using separate rankings for BW and HC. This is the approach taken by some authors \citep{Lin1991,Guellec2015}. However, this approach presents a difficulty. While the specific cutoff values (e.g., 10th percentile) may be relevant for BW and HC taken individually, nothing can be said about the joint ranking of the measurements. We have already noticed that the observation marked by a thick cross in the left plot of Figure~\ref{fig:1} presents normal (for gestational age) weight (489 g) and HC (30 cm) and thus falls in the baseline group. Upon closer inspection, this infant has a rather extreme HC-to-BW ratio (right plot in Figure~\ref{fig:1}), a possible consequence of the brain-sparing effect.

In the next section, we examine the categories in Table~\ref{tab:2} in more detail to see whether the abnormality of the HC-to-BW ratio represents an additional risk factor. In particular, the categories `normal BW, normal HC' and `subnormal BW, subnormal HC' are of clinical interest since they represent the baseline and the highest mortality risk categories, respectively. The category `subnormal BW, normal HC' comprises infants with a BW-HC imbalance. Some studies use this imbalance to define asymmetric growth restriction \citep{Guellec2015}. However, as we previously argued, this definition is limited as it does not take into account the proportionality between BW and HC.

\subsection{Risk classification based on the allometric direction}\label{sec:4.2}

\begin{figure}[h!]
\includegraphics[scale = .40]{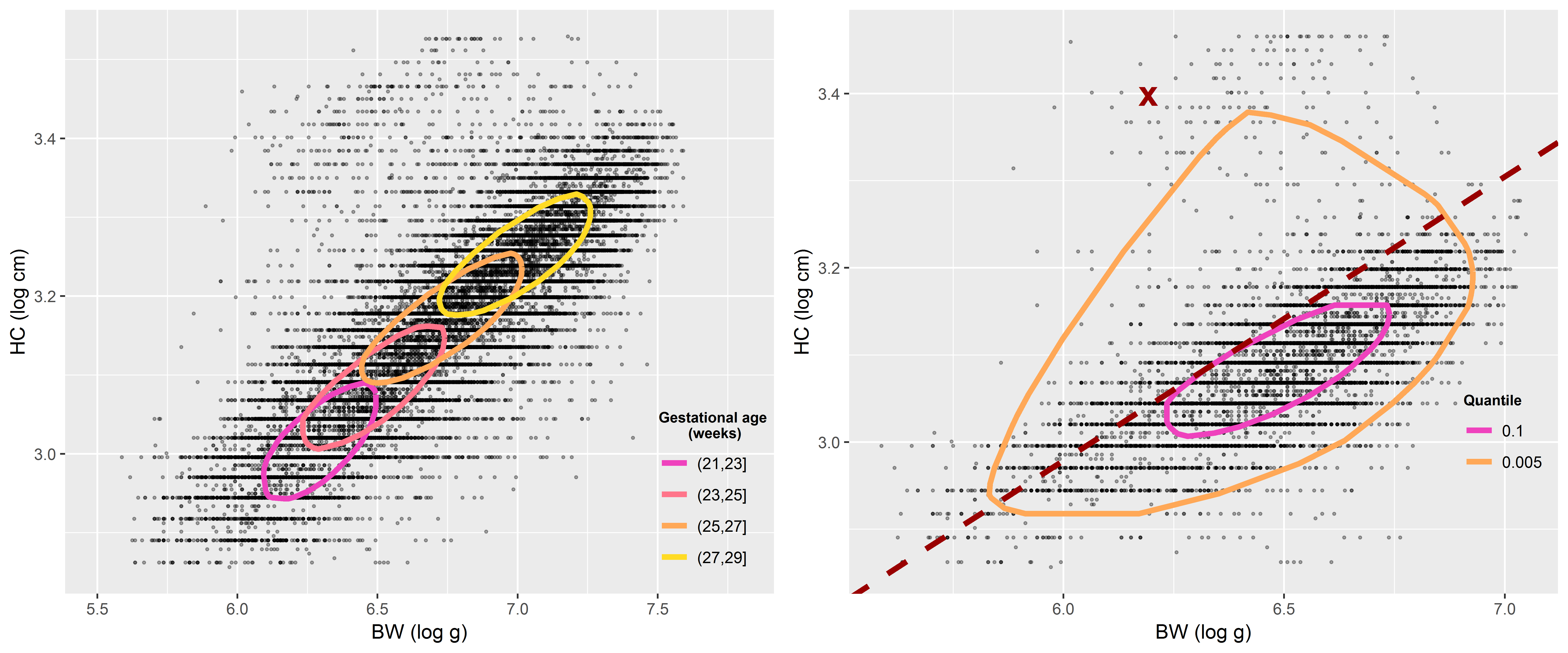}
\caption{Left: gestational-age-specific directional quantile envelopes (DQEs) of birthweight (BW) and head circumference (HC) for all girls at level $p = 0.1$. Right: DQEs of BW and HC at levels $p = 0.1$ and $p = 0.005$ for girls born at $(23,25]$ weeks of gestation. The dashed line marks the 90th percentile of the HC-to-BW ratio in the allometric direction.\label{fig:2}}
\end{figure}

We first explore the bivariate distribution of BW and HC using DQEs conditional on gestational age. We use DQEs at level $p = 0.1$ since the directional quantile $Q_{\mathbf{d}\tp \mathbf{Z}}(0.1)$ coincides with the univariate 10th percentile $Q_{\log \mathrm{BW}}(0.1)$ for $\mathbf{d} = (1,0)\tp$ and $Q_{\log \mathrm{HC}}(0.1)$ for $\mathbf{d} = (0,1)\tp$ which are commonly used as clinical cutoffs.

We can take advantage of model \eqref{eq:6} to estimate the DQE of $\mathbf{Z} = (\log \mathrm{BW}, \log \mathrm{HC})\tp$ at level $p = 0.1$ conditional on gestational age, i.e.
\[
Q_{\mathbf{d}\tp \mathbf{Z}|\mathbf{X}}(0.1) = \beta_{0}(0.1) + \beta_{1}(0.1) x_{1} + \beta_{2}(0.1) x_{2} + \beta_{3}(0.1) x_{3},
\]
where the reference is $(21,23]$ weeks and $x_{j}$, $j = 1, 2, 3$, are dummy variables for the other gestational age intervals. The left plot of Figure~\ref{fig:2} shows the estimated DQEs for girls in each age interval. The points that lie outside a particular contour line (i.e., the set $\overline{D(0.1)}$) represent BW and HC measurements that are \emph{jointly abnormal} as compared to those that fall in $D(0.1)$, conditional on a specific gestational age group. We should note that, while informative, the classification based on the DQE is broad since the `abnormal' labelling of measurements outside the DQE applies to disparate groups: infants with subnormal BW and HC, infants with supranormal BW and HC, and those with asymmetric BW and HC. We may want to focus on a particular group of measurements that are jointly abnormal, like those for which $R > Q_{R}(0.9)$ (supranormal HC-to-BW ratio) that are suggestive of the brain-sparing effect.

We then fitted model~\eqref{eq:8} for BW and HC on the log-scale using standardized major axis (MA) regression as implemented in the \texttt{R} package \texttt{smatr} \citep{Warton2012}. The reason for this choice lies in the likely presence of measurement error in both variables \citep{Warton2006}. Table~\ref{tab:3} shows estimates of the coefficients and standard errors for all infants as well as by sex and by gestational age. Overall, the estimated coefficient $\hat{b} \approx 0.32$ hints at the cubic relationship between length and volume, although the test of the null hypothesis $H_{0}: b = 1/3$ gave a $p$-value less than 0.001. The estimated scaling exponent was almost identical for boys and girls ($p = 0.841$), but changed significantly ($p < 0.001$) across gestational ages. The rightmost column of Table~\ref{tab:3} shows the $90$th percentile of the estimated allometric ratio $R = \mathrm{HC}/\mathrm{BW}^{\hat{b}}$. An illustration for girls born at $(23,25]$ weeks is shown in the right plot of Figure~\ref{fig:2}. The dashed line marks the $90$th percentile of the HC-to-BW ratio in the allometric direction (note that this is tangent to the DQE at level $p = 0.1$). Points to its left have a relatively large value of HC as compared to that of BW. Among these, marked by a thick cross, we find the girl born at 162 days of gestation that was featured in Figure~\ref{fig:1}. Without question, this observation seems to be rather extreme even at $p = 0.005$.

\begin{table}[t!]
\caption{Estimates (standard errors) of the coefficients of the allometric model for BW and HC, along with the $p$-values of the tests on equality of the slopes. The $90$th percentile of the allometric ratio is reported in the last column.\label{tab:3}}
\centering
\begin{tabular}{lrrr}
  \toprule
    & \multicolumn{3}{c}{Parameter}\\
    & \multicolumn{1}{c}{$\log_{10} a$} & \multicolumn{1}{c}{$b$} & \multicolumn{1}{c}{$Q_{R}(0.9)$}\\
  \midrule
  \textit{All} & 0.4488 (0.0011) & 0.3166 (0.0004) & 2.9647 (0.0006) \\
  \midrule
  \multicolumn{3}{l}{\textit{Sex ($p$-value $0.841$)}} \\
  \midrule
  Females & 0.4488 (0.0017) & 0.3166 (0.0006) & 2.9641 (0.0008) \\
  Males & 0.4497 (0.0016) & 0.3164 (0.0005) & 2.9704 (0.0009) \\
  \midrule
  \multicolumn{3}{l}{\textit{Gestational age ($p$-value $< 0.001$)}}\\
  \midrule
  (21,23] & 0.1816 (0.0087) & 0.4125 (0.0032) & 1.6123 (0.0018) \\
  (23,25] & 0.4212 (0.0034) & 0.3260 (0.0012) & 2.7798 (0.0013) \\
  (25,27] & 0.5102 (0.0027) & 0.2963 (0.0009) & 3.4026 (0.0011) \\
  (27,29] & 0.5413 (0.0024) & 0.2865 (0.0008) & 3.6553 (0.0010) \\
   \bottomrule
\end{tabular}
\end{table}

Using the results in Table~\ref{tab:3}, we can calculate the allometric ratio $R = \mathrm{HC}/\mathrm{BW}^{\hat{b}_{j}}$ for each infant, where $\hat{b}_{j}$ is the estimated coefficient for gestational age group $j$, $j = 1, \ldots, 4$, and classify these ratios based on the gestational-age-specific cutoffs $\hat{Q}_{R}(0.9)$. Table~\ref{tab:4} compares gestational-age-adjusted mortality in infants with normal ($R \leq Q_{R}(0.9)$) and supranormal ($R > Q_{R}(0.9)$) ratios. In these two groups, the baseline mortality risk is comparable for infants with normal BW and HC: between 0.11 and 0.13 for females, and 0.14 for males. However, the mortality risk for infants with small BW and HC is about three times the baseline risk in those with supranormal ratios, but about twice the baseline risk in those with a normal ratio. As a consequence, absolute risks too differ greatly. For example, \emph{disproportionately} small boys have an absolute risk of $3.17 \times 0.14 = 0.44$ while \emph{proportionately} small boys have an absolute risk of $2.16 \times 0.14 = 0.30$. Note that in Table~\ref{tab:4} the category of infants with normal BW and subnormal HC has been omitted since, by definition, there are no infants with supranormal ratios and subnormal HC.

In summary, preterm infants that have small BW and HC (below their respective 10th percentiles) are at high risk of mortality, with a relative risk of around 2.5 as compared to infants with normal BW and HC (Table~\ref{tab:2}). However, this relative risk is an `average' of a lower risk in proportionately small infants and a higher risk in disproportionately small infants (Table~\ref{tab:4}). Hence, the classification based on the allometric direction tangent to the DQE identifies the group of disproportionately growth-restricted infants as those with considerably high risk among small infants.

As we repeatedly mentioned, if the correct scaling exponent is not applied, then misclassification bias may result and distort risk assessment. In Appendix~\ref{sec:B} (Table~\ref{tab:B.3}), we report estimates of mortality risk for infants with normal and supranormal ratios, where the ratio is calculated using the isometric scaling, i.e. $R_{0} = \mathrm{HC}/\mathrm{BW}$. It is apparent that the mortality risk for infants with small BW and HC is not dissimilar from the baseline risk in those with supranormal ratios, while the magnitude of the estimated relative risk in those with normal ratios is noticeably smaller than that obtained in Table~\ref{tab:4}. Moreover, the classification based on the isometric scaling leads to relative risk estimates less than (although not significantly different from) 1 in one of the female groups, which is contrary to the well-established notion that SGA infants are at higher mortality risk.

\begin{table}[ht]
\caption{The gestational-age-adjusted mortality risk and $95\%$ confidence interval for infants born preterm (22 to 29 weeks) with normal birthweight (BW) and head circumference (HC) are shown in bold font. The other rows show the mortality relative risk (as compared to infants with normal BW and HC) and $95\%$ confidence interval for infants with either one or both anthropometric measurements below the univariate 10th percentile. Estimates are given by sex, separately for infants whose HC-to-BW ratio in the allometric direction is below the 90th percentile (normal) or above it (supranormal). The sample size is denoted by $N$.\label{tab:4}}
\centering
\begin{tabular}{llrrrrrrrr}
  \toprule
  \multicolumn{1}{l}{BW} & \multicolumn{1}{l}{HC} & \multicolumn{1}{c}{$N$} & \multicolumn{1}{c}{Risk} & \multicolumn{1}{c}{Lower} & \multicolumn{1}{c}{Upper} & \multicolumn{1}{c}{$N$} & \multicolumn{1}{c}{Risk} & \multicolumn{1}{c}{Lower} & \multicolumn{1}{c}{Upper}\\
  \midrule
  \multicolumn{2}{l}{\textit{Females}} & \multicolumn{4}{c}{\textit{Normal HC-to-BW ratio}} & \multicolumn{4}{c}{\textit{Supranormal HC-to-BW ratio}}\\
  \midrule
  Normal & Normal & 59,033 & \textbf{0.11} & \textbf{0.11} & \textbf{0.11} & 4,490 & \textbf{0.13} & \textbf{0.12} & \textbf{0.14} \\
  $<$ 10th & Normal & 1,425 & 1.70 & 1.52 & 1.90 & 1,730 & 1.62 & 1.44 & 1.82 \\
  $<$ 10th & $<$ 10th & 3,284 & 2.30 & 2.16 & 2.45 & 861 & 3.14 & 2.82 & 3.50 \\
  \midrule
  \multicolumn{2}{l}{\textit{Males}} & \multicolumn{4}{c}{\textit{Normal HC-to-BW ratio}} & \multicolumn{4}{c}{\textit{Supranormal HC-to-BW ratio}}\\
  \midrule
  Normal & Normal & 65,984 & \textbf{0.14} & \textbf{0.13} & \textbf{0.14} & 5,086 & \textbf{0.14} & \textbf{0.13} & \textbf{0.15} \\
  $<$ 10th & Normal & 1,425 & 1.77 & 1.61 & 1.95 & 2,181 & 1.81 & 1.64 & 2.00 \\
  $<$ 10th & $<$ 10th & 3,384 & 2.16 & 2.04 & 2.28 & 1,168 & 3.17 & 2.88 & 3.48 \\
  \bottomrule
\end{tabular}
\end{table}

\subsection{Asymmetric growth and hypertension}

We investigated maternal hypertension, which has been previously found to be a determinant of the HC-to-BW ratio in its severe, pregnancy-induced form \citep{Kramer1990a}. While information on hypertension is available in the VON data, unfortunately this variable has two limitations: it includes both chronic and pregnancy-induced hypertension (PIH), and is missing for about $21\%$ (though mostly in early years of data collection). Yet, some interesting observations can be made.

Hypertension is known to increase the likelihood of growth restriction. This is apparent from Figure~\ref{fig:3} which shows estimated DQEs conditional on hypertension status. Its relationship with mortality risk is, however, controversial. Some studies suggested that PIH increases the risk of fetal, perinatal, and early neonatal mortality \citep{Jain1997}, while other studies found the opposite \citep{Chen2006}. In our data, the prevalence of hypertension (chronic and gestational) is overall about $28\%$, while infant mortality rates are approximately $0.15\%$ and $0.11\%$ in, respectively, normotensive and hypertensive mothers, suggestive of a `protective' effect of hypertension. However, the rate of hypertension is $25\%$ among mothers of babies with normal HC-to-BW ratio, but $54\%$ in mothers of disproportionately small babies. In other words, there is a strong, positive association between hypertension and supranormal ratios ($\chi^2$ test's $p$-value $< 0.001$). As shown in Table~\ref{tab:5}, the mortality risk in infants with normal BW and HC is lower if born to hypertensive mothers as compared to normotensive mothers, regardless of their HC-to-BW ratio. However, disproportionately small infants born to hypertensive mothers have an absolute risk of $6.21 \times 0.07 = 0.43$ which, compared to an absolute risk of $2.40 \times 0.17 = 0.41$ in their peers born to normotensive mothers, gives a rather different picture of the association between mortality and hypertension. It has been speculated that, in preterm infants born to hypertensive mothers, maternal hypertension is less damaging for fetal development than other causes of growth restriction \citep{McBride2017}. Our results do not exclude this hypothesis, but they also point to an interaction between hypertension (presumably its severe forms) and asymmetric growth restriction.

\begin{figure}[h!]
\includegraphics[scale = .5]{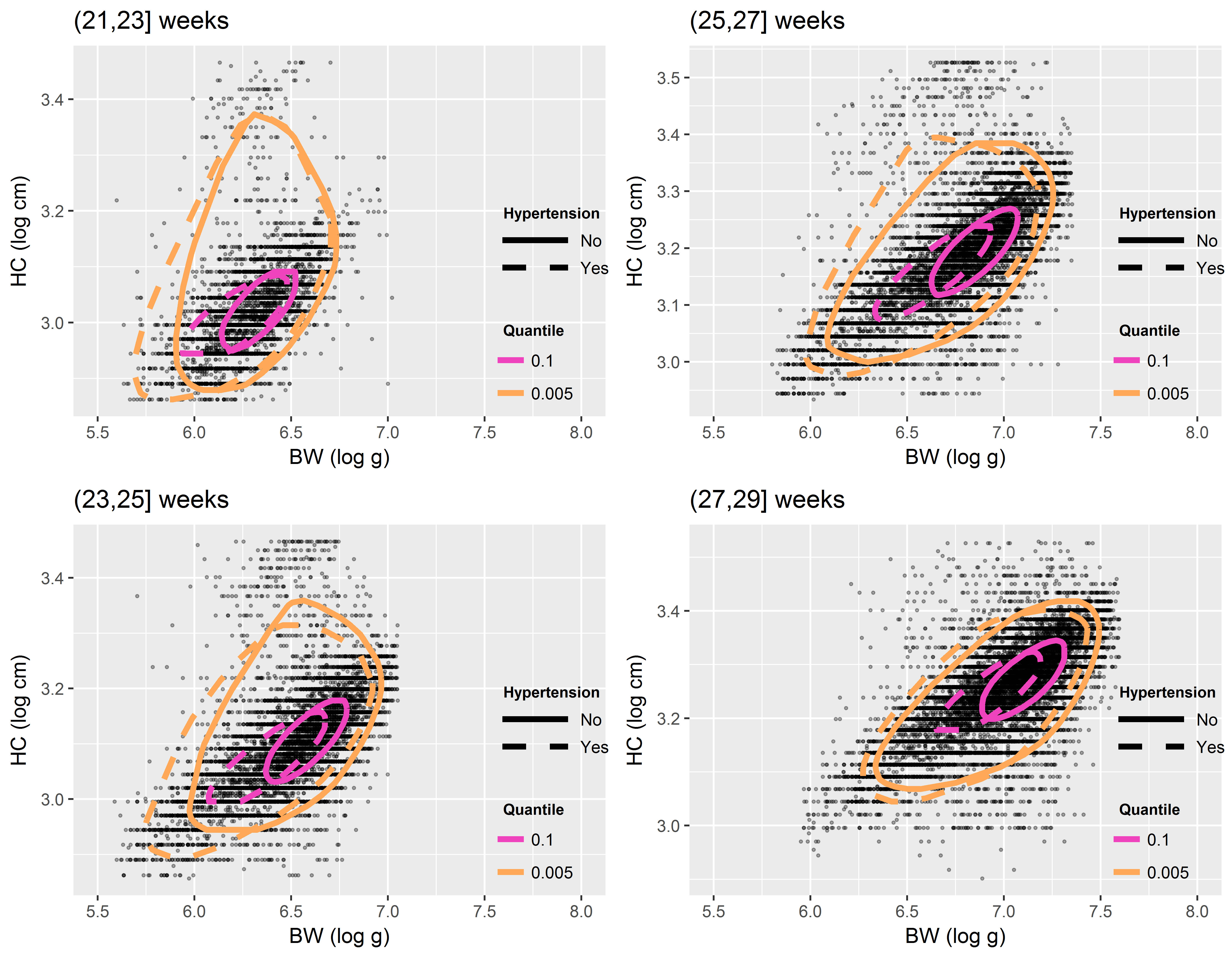}
\caption{Directional quantile envelopes (DQEs) of birthweight (BW) and head circumference (HC) at levels $p = 0.1$ and $p = 0.005$ for infants born to normotensive (solid lines) and hypertensive mothers at different gestational ages.\label{fig:3}}
\end{figure}

\begin{table}[ht]
\caption{The gestational-age-adjusted mortality risk and $95\%$ confidence interval for infants born preterm (22 to 29 weeks) with normal birthweight (BW) and head circumference (HC) are shown in bold font. The other rows show the mortality relative risk (as compared to infants with normal BW and HC) and $95\%$ confidence interval for infants with either one or both anthropometric measurements below the univariate 10th percentile. Estimates are given for groups divided by normal and supranormal HC-to-BW ratio in the allometric direction, separately for infants born to normotensive and hypertensive mothers. The sample size is denoted by $N$.\label{tab:5}}
\centering
\begin{tabular}{llrrrrrrrr}
  \toprule
  \multicolumn{1}{l}{BW} & \multicolumn{1}{l}{HC} & \multicolumn{1}{c}{$N$} & \multicolumn{1}{c}{Risk} & \multicolumn{1}{c}{Lower} & \multicolumn{1}{c}{Upper} & \multicolumn{1}{c}{$N$} & \multicolumn{1}{c}{Risk} & \multicolumn{1}{c}{Lower} & \multicolumn{1}{c}{Upper}\\
  \midrule
  \multicolumn{2}{l}{\textit{Normal HC-to-BW ratio}} & \multicolumn{4}{c}{\textit{Normotensive}} & \multicolumn{4}{c}{\textit{Hypertensive}}\\
  \midrule
Normal & Normal & 76,700 & \textbf{0.13} & \textbf{0.13} & \textbf{0.13} & 22,136 & \textbf{0.07} & \textbf{0.07} & \textbf{0.08} \\
  $<$ 10th & Normal & 666 & 2.65 & 2.39 & 2.94 & 1,631 & 2.05 & 1.81 & 2.32 \\
  $<$ 10th & $<$ 10th & 1,972 & 2.43 & 2.28 & 2.60 & 3,479 & 3.30 & 3.06 & 3.56 \\
  \midrule
  \multicolumn{2}{l}{\textit{Supranormal HC-to-BW ratio}} & \multicolumn{4}{c}{\textit{Normotensive}} & \multicolumn{4}{c}{\textit{Hypertensive}}\\
  \midrule
  Normal & Normal & 3,993 & \textbf{0.17} & \textbf{0.16} & \textbf{0.19} & 3,353 & \textbf{0.07} & \textbf{0.06} & \textbf{0.08} \\
  $<$ 10th & Normal & 951 & 1.80 & 1.61 & 2.03 & 2,080 & 2.75 & 2.36 & 3.22 \\
  $<$ 10th & $<$ 10th & 579 & 2.40 & 2.14 & 2.70 & 1,041 & 6.21 & 5.37 & 7.19 \\
  \bottomrule
\end{tabular}
\end{table}

\clearpage

\section{Discussion}\label{sec:5}
In this study, we proposed an approach to risk classification of abnormal ratios based on the allometric direction, which is intimately connected to directional quantiles \citep{Kong2012}, and we applied these methods to data on birthweight and head circumference in a large cohort of preterm infants. Our analysis suggests that small preterm infants with large HC-to-BW ratio are at increased mortality risk as compared not only to AGA infants, but also to proportionately growth-restricted preterm infants. There is evidence in the literature that asymmetric growth restriction increases the likelihood of adverse outcomes. One study concluded that ``the prognosis of SGA infants with asymmetric growth [defined by the ratio of HC to abdominal circumference] is poorer than that of symmetrically grown infants and much worse than that of AGA infants'' \citep{Dashe2000}. Therefore, our findings are consistent with the literature.

Following our approach, disproportionately growth-restricted infants can be identified as those whose BW and HC lie outside a given quantile contour along the allometric direction. It is straightforward to carry out the identification task. All is needed for classification is the scaling exponent $b$ and the cutoff $Q_{R}(0.9)$. The calculation of the DQE is not necessary. Suppose for instance that an infant is born at 23 weeks and that her BW is 390 g and HC is 19.5 cm. These measurements already put her in a high-risk category since both measurements are below their respective 10th percentile (univariate) thresholds for gestational age and sex \citep{Boghossian2016,Boghossian2018}. Using gestational-age specific estimates from Table~\ref{tab:3}, the allometric HC-to-BW ratio is easily found as $19.5/390^{0.4125} = 1.6643$ which is greater than the cutoff $\hat{Q}_{R}(0.9) = 1.6123$. The statistical justification of this procedure comes from Proposition~\ref{prop:1}, which asserts that the allometric direction is tangent to the DQE and thus guarantees that all the points such that $R > Q_{R}(0.9)$ are a subset of $\overline{D(0.1)}$ (jointly abnormal). Our study offers not only a statistically principled approach to risk classification, but also large-sample estimated cutoffs that can be immediately used by practitioners together with previously published anthropometric charts for BW and HC using the same data \citep{Boghossian2016,Boghossian2018}.

There are some limitations in our data. Firstly, gestational age is subject to measurement error and the accuracy depends on the method of estimation as well as when the measurement is made. We believe that our proposed approach can be extended to account for this error when estimating directional quantiles and allometric directions. Another possible limitation in our analysis is the omission of covariates that might explain different allometric relationships. We partly made up for this deficiency by investigating hypertension, and found that the mortality risk is associated with an interaction between asymmetric growth restriction and hypertension. In a separate analysis (results not shown), we investigated the allometric model \eqref{eq:8} conditional on ethnicity but no meaningful differences were found. Unfortunately, other potentially relevant covariates such as maternal age or parental weight and height, which are known to be associated with birthweight \citep{Griffiths2007,Geraci2016}, are not available in the VON data. There are, however, a number of in-hospital morbidity outcomes which we will explore in a separate study. Furthermore, it may be relevant to extend the analysis to infants with subnormal HC-to-BW ratios. Suboptimal head size at birth is known to be a risk factor for poor neurodevelopmental outcomes if it persists after birth \citep{Hack1991,Kuban2009}. This type of investigation would require follow-up information which is not available to us at this time.

The proposed approach can be extended to applications where models equivalent to \eqref{eq:8} are expected to hold and the interest lies in the classification of ratios. These obviously include applications with other anthropometric ratios, such as the waist-to-hip ratio, which has been proposed as a predictor of newborn size \citep{Brown1996}, the waist-to-height ratio, which has been found associated with prehypertension \citep{Djeric2017}, and the subscapular-to-triceps skinfolds, fat--mass-to-weight, and fat--mass-to-fat--free mass ratio indices which are used as risk factors in stunted populations \citep{Judd2008}. As we move out of anthropometry, we find several other potential applications, especially in the biomedical sciences. For example, insulin and C-peptide are strongly correlated, and their ratio is a biomarker used to discriminate between insulinoma, a tumor of the pancreas, and injection of excessive insulin (surreptitious or inadvertent) \citep{Lebowitz1993}. The ejection fraction is a measure of the ratio between the blood pumped in and out of the left ventricle. A low input volume indicates atrial insufficiency, a low output volume ventricular insufficiency, and a low ratio might be associated to heart failure in several respects (e.g., decline of the contractile function). The list goes on with the albumin-to-globulin ratio (liver insufficiency, immunodeficiency, auto-immunity, infection, cancer); calcium-to-albumin ratio (malnutrition); albumin-to-creatinine ratio (kidney disease); cortisol-to-cortisone ratio (cardiovascular risk); calcitriol-to-calcifediol ratio (renal efficiency) \citep{Rotondi2018}; oxigen extration ratio (haemodialysis efficacy); metabolite ratios (type 2 diabetes) \citep{Molnos2018}; and stable isotope ratios (diet) \citep{OBrien2015}.

\section*{Acknowledgements}
Marco Geraci was funded by an ASPIRE grant from the Office of the Vice President for Research at the University of South Carolina and by the National Institutes of Health -- National Institute of Child Health and Human Development (Grant Number: 1R03HD084807-01A1). The authors wish to thank: Linglong Kong for providing the R code to perform quantile estimation with multivariate data as described in \cite{Kong2012}; Erika Edwards for her support and participation in discussions of earlier drafts of the manuscript; two anonymous reviewers for their valuable comments on an earlier draft of the manuscript. The University of Vermont's committee on human research determined that this study using VON's de-identified research repository was not human subjects research.

\clearpage

\appendix
\section{Appendix -- Multivariate percentiles and allometric analysis}
\label{sec:A}

Let us consider a $k$-dimensional vector $\mathbf{Y}$, $K \geq 2$, and the element-wise log-transformed vector $\mathbf{Z} = (\log Y_{1}, \log Y_{2}, \ldots, \log Y_{K})\tp$. Also, let $\bm\Sigma$ be the positive-definite variance-covariance matrix of $\mathbf{Z}$. The goal of PCA applied to $\mathbf{Z}$ is to obtain the decomposition $\bm\Sigma = \bm\Delta\bm\Lambda\bm\Delta\tp$. (As noted elsewhere \citep{Jolicoeur1963}, the log-transformation removes measurement scale differences and therefore makes the use of the correlation matrix unnecessary. However, some authors \citep{Somers1986} advocate the use of the correlation matrix to separate size and shape variation.) The columns of $\bm\Delta$, say $\bm\delta_{k}$, $k = 1, \ldots, K$, are the unit-norm eigenvectors of the decomposition, while the diagonal matrix $\bm\Lambda$ has diagonal elements $\lambda_{k}$, $k = 1, \ldots, K$, and are given by the corresponding eigenvalues. The PCA scores are then obtained as $\tilde{\mathbf{Z}} = \mathbf{Z}\bm\Delta$.

The first component $\bm\delta_{1}$ (major axis) is the direction with maximal variance and its equation can expressed as \citep{Jolicoeur1963}
\begin{align*}
\frac{1}{\cos \theta_{1}}\left(Z_{1} - \expect(Z_{1})\right) &= \ldots = \frac{1}{\cos \theta_{i}}\left(Z_{i} - \expect(Z_{i})\right) = \frac{1}{\cos \theta_{j}}\left(Z_{j} - \expect(Z_{j})\right) = \ldots\\ &= \frac{1}{\cos \theta_{K}}\left(Z_{K} - \expect(Z_{K})\right),
\end{align*}
where $\theta_{i}$ is the angle made by the first principal component with the coordinate axis of $Z_{i}$. Thus
\[
\left(\frac{Y_{1}}{g_{1}}\right)^{1/\cos \theta_{1}} = \ldots = \left(\frac{Y_{i}}{g_{i}}\right)^{1/\cos \theta_{i}} = \left(\frac{Y_{j}}{g_{j}}\right)^{1/\cos \theta_{j}} = \ldots = \left(\frac{Y_{K}}{g_{K}}\right)^{1/\cos \theta_{K}},
\]
where $g_{i}$ is the geometric mean of $Y_{i}$. It follows that
\begin{equation}\label{eq:11}
Y_{i} = a_{ij} Y_{j}^{b_{ij}}, \qquad i\neq j,\quad j = 1, \ldots, K,
\end{equation}
where $a_{ij} = g_{i}/g_{j}^{b_{ij}}$ and $b_{ij} = \cos \theta_{i}/\cos \theta_{j}$. That is, the ratio of any pair of PCA loadings, say $i$ and $j$, from the first component $\bm\delta_{1}$ approximates the slope of the bivariate MA regression for the variables $Z_{i}$ and $Z_{j}$ \citep{Corruccini1983}. Using these results, we offer the following proposition.

\begin{proposition}\label{prop:2}
Let $\mathbf{Y} = (Y_{1}, \ldots, Y_{K})\tp$ be a multivariate random vector collecting $K$ continuous and strictly positive random variables, and assume that the PCA allometric model \eqref{eq:11} holds true for any pair of variables $i,j$, with $i \neq j$. Also, define $\mathbf{Z} = (\log Y_{1}, \ldots, \log Y_{K})\tp$ and assume that the $p$th directional quantile envelope $D(p)$ generated by $Q_{\mathbf{d}\tp \mathbf{Z}}(p)$ is smooth. Then the hyperplanes
\begin{align}\label{eq:12}
\nonumber \log Y_{i} &= \frac{1}{K-1}\log\{Q_{R_{i}}(p)\} + \sum_{j \neq i} \frac{b_{ij}}{K-1}\log Y_{j}\\
\log Y_{i} &= \frac{1}{K-1}\log\{Q_{R_{i}}(1-p)\} + \sum_{j \neq i} \frac{b_{ij}}{K-1}\log Y_{j},
\end{align}
where $R_{i} = \dfrac{Y_{i}^{K-1}}{g_{-i}}$ and $g_{-i} = \prod_{j\neq i} Y_{j}^{b_{ij}}$, are tangent to $D(p)$.
\end{proposition}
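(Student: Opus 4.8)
The plan is to follow the template of the proof of Proposition~\ref{prop:1}, replacing the single bivariate allometric relation by its $K$ pairwise PCA analogues and the allometric direction by a family of directions indexed by the coordinate $i$. As there, the only thing that needs to be established is that the tangent hyperplanes to $D(p)$ have the stated form: their existence, uniqueness and the tangency itself follow from the smoothness assumption on $D(p)$ together with the geometric characterisation of directional quantile envelopes in \cite{Kong2012}. Concretely, by definition of $D(p)$ and its smoothness, for any fixed nonzero $\mathbf{d}$ the hyperplane tangent to $D(p)$ ``in the direction $\mathbf{d}$'' is $\{\mathbf{z}\in\mathbb{R}^{K}:\mathbf{d}\tp\mathbf{z}=Q_{\mathbf{d}\tp\mathbf{Z}}(p)\}$, which, whenever $d_{i}\neq 0$, can be solved for $Z_{i}$ to give $Z_{i}=\frac{1}{d_{i}}Q_{\mathbf{d}\tp\mathbf{Z}}(p)-\sum_{j\neq i}\frac{d_{j}}{d_{i}}Z_{j}$; rescaling $\mathbf{d}$ by a positive constant changes neither the supporting half-space nor this equation, so we may work with any convenient unnormalised representative.

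First I would pin down the relevant direction. The PCA allometric model \eqref{eq:11} gives, for every pair $i\neq j$, the log-linear identity $Z_{i}-b_{ij}Z_{j}=\log a_{ij}$. Summing these $K-1$ identities over $j\neq i$ yields $(K-1)Z_{i}-\sum_{j\neq i}b_{ij}Z_{j}=\sum_{j\neq i}\log a_{ij}$, a constant, which is the $K$-dimensional counterpart of \eqref{eq:10}. Accordingly, fix $i$ and take $\mathbf{d}$ to be the vector whose $i$th coordinate equals $K-1$ and whose $j$th coordinate equals $-b_{ij}$ for $j\neq i$. Then, by the definitions $R_{i}=Y_{i}^{K-1}/g_{-i}$ and $g_{-i}=\prod_{j\neq i}Y_{j}^{b_{ij}}$, a direct computation gives $\mathbf{d}\tp\mathbf{Z}=(K-1)\log Y_{i}-\sum_{j\neq i}b_{ij}\log Y_{j}=\log R_{i}$, which is the multivariate analogue of the point at which \eqref{eq:7} enters the proof of Proposition~\ref{prop:1}.

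It then remains to rewrite the tangent-hyperplane equation. Since $\log$ is monotone, the transformation rule of Section~\ref{sec:3.1} gives $Q_{\mathbf{d}\tp\mathbf{Z}}(p)=Q_{\log R_{i}}(p)=\log\{Q_{R_{i}}(p)\}$; substituting $d_{i}=K-1$ and $d_{j}=-b_{ij}$ into $Z_{i}=\frac{1}{d_{i}}Q_{\mathbf{d}\tp\mathbf{Z}}(p)-\sum_{j\neq i}\frac{d_{j}}{d_{i}}Z_{j}$ produces the first hyperplane of \eqref{eq:12}. The second hyperplane comes from the opposite direction $-\mathbf{d}$, using the reflection rule $Q_{-\mathbf{d}\tp\mathbf{Z}}(p)=-Q_{\mathbf{d}\tp\mathbf{Z}}(1-p)$ followed once more by the transformation rule, exactly as in the closing line of the proof of Proposition~\ref{prop:1}; uniqueness and the precise tangency rest again on the smoothness of $D(p)$ and on \cite{Kong2012}. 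The step I expect to need the most care is the bookkeeping in the middle paragraph: one must verify that summing the $K-1$ pairwise relations reproduces exactly the exponents appearing in $R_{i}$ and $g_{-i}$, and that dividing the linear tangent equation by $K-1$ yields precisely the coefficients $1/(K-1)$ and $b_{ij}/(K-1)$ of \eqref{eq:12} — routine but error-prone — while keeping in mind that the direction used here is a \emph{sum} of pairwise allometric directions, not the principal component $\bm\delta_{1}$ itself, although the two coincide when $K=2$ and the argument then collapses to Proposition~\ref{prop:1}.
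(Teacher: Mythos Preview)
Your proposal is correct and follows essentially the same route as the paper: sum the $K-1$ pairwise log-allometric identities for a fixed $i$, read off the resulting direction, identify $\mathbf{d}\tp\mathbf{Z}$ with $\pm\log R_{i}$, and then apply the transformation and reflection rules to rewrite the tangent-hyperplane equation in the form \eqref{eq:12}. The only cosmetic difference is the sign of the chosen direction: the paper takes $d_{i}=-(K-1)$ and $d_{j}=b_{ij}$, so that $\mathbf{d}\tp\mathbf{Z}=-\log R_{i}$, whereas you take the negative of this, giving $\mathbf{d}\tp\mathbf{Z}=\log R_{i}$; your choice is arguably tidier since the transformation rule applies directly to the increasing map $\log$ without an intervening reflection for the first hyperplane.
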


\begin{proof}
In its implicit form, the hyperplane tangent to $D(p)$ is by definition given by
\[
\sum_{j=1}^{K} d_{j}Z_{j} = Q_{\mathbf{d}\tp \mathbf{Z}}(p),
\]
for any given direction $\mathbf{d} = (d_{1}, \ldots, d_{K})\tp$. We also note that
\begin{equation}\label{eq:13}
\mathbf{d}\tp \mathbf{Z} =  \sum_{k=1}^{K} d_{k}\log Y_{k} = \log\left(\frac{Y_{i}^{d_{i}}}{\prod_{j \neq i}Y_{j}^{-d_{j}}}\right).
\end{equation}

On the log-scale, the $K - 1$ allometric equations given in Proposition~\ref{prop:2} relating a given $Y_{i}$ to the other variables $Y_{j}$, $j \neq i$, can be re-written as
\begin{align*}
b_{i1} \log(Y_{1}) - \log(Y_{i}) &= -\log a_{i1}\\
& \vdots \\
b_{i(i-1)} \log(Y_{i-1}) - \log(Y_{i}) &= -\log a_{i(i-1)}\\
b_{i(i+1)} \log(Y_{i+1}) - \log(Y_{i}) &= -\log a_{i(i+1)}\\
& \vdots \\
b_{iK} \log(Y_{K}) - \log(Y_{i}) &= -\log a_{iK}.
\end{align*}
If we take the sum of the terms on the left- and right-hand sides, respectively, we obtain
\begin{equation}\label{eq:14}
\log\left(\frac{Y_{i}^{-(K - 1)}}{\prod_{j \neq i}Y_{j}^{-b_{ij}}}\right) = -\sum_{j \neq i} \log a_{ij},
\end{equation}
which has the same form of \eqref{eq:13} with $d_{j} = b_{ij}$, for all $j \neq i$, and $d_{i} = -(K - 1)$. Therefore, the hyperplane tangent to $D(p)$ in the allometric direction, defined as $\mathbf{d} = (b_{i1}, \ldots, b_{i(i-1)}, -K+1, b_{i(i+1)}, \ldots, b_{iK})\tp$, is $\log Y_{i} = -\frac{1}{K-1}Q_{\mathbf{d}\tp \mathbf{Z}}(p) + \sum_{j \neq i} \frac{b_{ij}}{K-1}\log Y_{j}$. By \eqref{eq:13} we have that
\[
\mathbf{d}\tp \mathbf{Z} = -\log\left(\frac{Y_{i}^{K-1}}{\prod_{j \neq i}Y_{j}^{b_{ij}}}\right) = -\log R_{i}.
\]
Similarly to before, we find $Q_{\mathbf{d}\tp \mathbf{Z}}(p) = Q_{-\log R_{i}}(p) = -\log\{Q_{R_{i}}(p)\}$. Then the tangent hyperplane becomes $\log Y_{i} = \frac{1}{K-1}\log\{Q_{R_{i}}(p)\} + \sum_{j \neq i} \frac{b_{ij}}{K-1}\log Y_{j}$, which corresponds to the first equation given in \eqref{eq:12}. The second equation in \eqref{eq:12} is found with similar arguments as those used in Proposition~\ref{prop:1}.
\end{proof}

Model \eqref{eq:11} is rather flexible as it allows for $K(K-1)/2$ distinct slopes $b_{ij}$ to be estimated. If there is evidence or theoretical justification of equality of slopes, some restrictions can be imposed, for example, by assuming homogeneity of the $b_{ij}$'s for some $j$'s or using an even more parsimonious model of the kind $Y_i = a_{i} \left(\prod_{m_{i}} Y_{m_{i}}\right)^{b_{i}}$, $i = 1, \ldots, K$, where $m_{i}$ indexes a subset of $K - 1$ variables $Y_{j}$, $j \neq i$.

\clearpage

\section{Appendix -- Additional tables}
\label{sec:B}

\begin{table}[h!]
\caption{The gestational-age-specific mortality risk and $95\%$ confidence interval for girls born preterm (22 to 29 weeks) with normal birthweight (BW) and head circumference (HC) are shown in bold font. The other rows show the mortality relative risk (as compared to infants with normal BW and HC) and $95\%$ confidence interval for infants with either one or both anthropometric measurements below the univariate 10th percentile. The sample size is denoted by $N$.\label{tab:B.1}}
\centering
\begin{tabular}{llrrrr}
  \toprule
\multicolumn{1}{l}{BW} & \multicolumn{1}{l}{HC} & \multicolumn{1}{c}{$N$} & \multicolumn{1}{c}{Risk} & \multicolumn{1}{c}{Lower} & \multicolumn{1}{c}{Upper} \\
  \midrule
  \multicolumn{6}{l}{$(21,23]$ weeks}\\
  \midrule
  Normal & Normal & 4,655 & \textbf{0.52} & \textbf{0.51} & \textbf{0.54} \\
  Normal & $<$ 10th & 322 & 1.23 & 1.13 & 1.34 \\
  $<$ 10th & Normal & 306 & 1.36 & 1.26 & 1.47 \\
  $<$ 10th & $<$ 10th & 213 & 1.50 & 1.39 & 1.61 \\
  \midrule
  \multicolumn{6}{l}{$(23,25]$ weeks}\\
  \midrule
  Normal & Normal & 14,979 & \textbf{0.19} & \textbf{0.19} & \textbf{0.20} \\
  Normal & $<$ 10th & 650 & 1.45 & 1.27 & 1.65 \\
  $<$ 10th & Normal & 736 & 1.85 & 1.67 & 2.05 \\
  $<$ 10th & $<$ 10th & 1,033 & 2.69 & 2.51 & 2.88 \\
  \midrule
  \multicolumn{6}{l}{$(25,27]$ weeks}\\
  \midrule
  Normal & Normal & 19,571 & \textbf{0.06} & \textbf{0.06} & \textbf{0.07} \\
  Normal & $<$ 10th & 649 & 1.84 & 1.47 & 2.30 \\
  $<$ 10th & Normal & 860 & 2.12 & 1.77 & 2.54 \\
  $<$ 10th & $<$ 10th & 1,368 & 4.19 & 3.77 & 4.65 \\
  \midrule
  \multicolumn{6}{l}{$(27,29]$ weeks}\\
  \midrule
  Normal & Normal & 24,318 & \textbf{0.02} & \textbf{0.02} & \textbf{0.03} \\
  Normal & $<$ 10th & 995 & 1.15 & 0.79 & 1.69 \\
  $<$ 10th & Normal & 1,253 & 1.70 & 1.28 & 2.25 \\
  $<$ 10th & $<$ 10th & 1,531 & 3.83 & 3.21 & 4.58 \\
  \bottomrule
\end{tabular}
\end{table}

\begin{table}[h!]
\caption{The gestational-age-specific mortality risk and $95\%$ confidence interval for boys born preterm (22 to 29 weeks) with normal birthweight (BW) and head circumference (HC) are shown in bold font. The other rows show the mortality relative risk (as compared to infants with normal BW and HC) and $95\%$ confidence interval for infants with either one or both anthropometric measurements below the univariate 10th percentile. The sample size is denoted by $N$.\label{tab:B.2}}
\centering
\begin{tabular}{llrrrr}
  \toprule
\multicolumn{1}{l}{BW} & \multicolumn{1}{l}{HC} & \multicolumn{1}{c}{$N$} & \multicolumn{1}{c}{Risk} & \multicolumn{1}{c}{Lower} & \multicolumn{1}{c}{Upper} \\
  \midrule
  \multicolumn{6}{l}{$(21,23]$ weeks}\\
  \midrule
Normal & Normal & 5,161 & \textbf{0.59} & \textbf{0.58} & \textbf{0.60} \\
  Normal & $<$ 10th & 297 & 1.18 & 1.10 & 1.28 \\
  $<$ 10th & Normal & 351 & 1.29 & 1.21 & 1.37 \\
  $<$ 10th & $<$ 10th & 227 & 1.42 & 1.33 & 1.51 \\
  \midrule
  \multicolumn{6}{l}{$(23,25]$ weeks}\\
  \midrule
  Normal & Normal & 16,721 & \textbf{0.24} & \textbf{0.23} & \textbf{0.24} \\
  Normal & $<$ 10th & 601 & 1.44 & 1.29 & 1.62 \\
  $<$ 10th & Normal & 953 & 1.68 & 1.55 & 1.83 \\
  $<$ 10th & $<$ 10th & 1,030 & 2.36 & 2.22 & 2.50 \\
  \midrule
  \multicolumn{6}{l}{$(25,27]$ weeks}\\
  \midrule
  Normal & Normal & 21,623 & \textbf{0.08} & \textbf{0.08} & \textbf{0.09} \\
  Normal & $<$ 10th & 934 & 1.47 & 1.23 & 1.76 \\
  $<$ 10th & Normal & 954 & 2.08 & 1.79 & 2.40 \\
  $<$ 10th & $<$ 10th & 1,485 & 3.99 & 3.67 & 4.35 \\
  \midrule
  \multicolumn{6}{l}{$(27,29]$ weeks}\\
  \midrule
  Normal & Normal & 27,565 & \textbf{0.03} & \textbf{0.03} & \textbf{0.03} \\
  Normal & $<$ 10th & 1,247 & 1.63 & 1.26 & 2.11 \\
  $<$ 10th & Normal & 1,348 & 1.79 & 1.41 & 2.27 \\
  $<$ 10th & $<$ 10th & 1,810 & 4.52 & 3.94 & 5.18 \\
  \bottomrule
\end{tabular}
\end{table}

\begin{table}[ht]
\caption{The gestational-age-adjusted mortality risk and $95\%$ confidence interval for infants born preterm (22 to 29 weeks) with normal birthweight (BW) and head circumference (HC) are shown in bold font. The other rows show the mortality relative risk (as compared to infants with normal BW and HC) and $95\%$ confidence interval for infants with either one or both anthropometric measurements below the univariate 10th percentile. Estimates are given by sex, separately for infants whose HC-to-BW ratio in the isometric direction is below the 90th percentile (normal) or above it (supranormal). The sample size is denoted by $N$.\label{tab:B.3}}
\centering
\begin{tabular}{llrrrrrrrr}
  \toprule
  \multicolumn{1}{l}{BW} & \multicolumn{1}{l}{HC} & \multicolumn{1}{c}{$N$} & \multicolumn{1}{c}{Risk} & \multicolumn{1}{c}{Lower} & \multicolumn{1}{c}{Upper} & \multicolumn{1}{c}{$N$} & \multicolumn{1}{c}{Risk} & \multicolumn{1}{c}{Lower} & \multicolumn{1}{c}{Upper}\\
  \midrule
  \multicolumn{2}{l}{\textit{Females}} & \multicolumn{4}{c}{\textit{Normal HC-to-BW ratio}} & \multicolumn{4}{c}{\textit{Supranormal HC-to-BW ratio}}\\
  \midrule
Normal & Normal & 61,254 & \textbf{0.11} & \textbf{0.11} & \textbf{0.11} & 2,269 & \textbf{0.20} & \textbf{0.18} & \textbf{0.21} \\
  $<$ 10th & Normal & 405 & 0.93 & 0.70 & 1.25 & 2,750 & 0.91 & 0.61 & 1.37 \\
  $<$ 10th & $<$ 10th & 582 & 1.87 & 1.59 & 2.20 & 3,563 & 1.11 & 0.99 & 1.23 \\
  \midrule
  \multicolumn{2}{l}{\textit{Males}} & \multicolumn{4}{c}{\textit{Normal HC-to-BW ratio}} & \multicolumn{4}{c}{\textit{Supranormal HC-to-BW ratio}}\\
  \midrule
Normal & Normal & 70,223 & \textbf{0.13} & \textbf{0.13} & \textbf{0.14} & 847 & \textbf{0.26} & \textbf{0.23} & \textbf{0.29} \\
  $<$ 10th & Normal & 1,055 & 1.56 & 1.38 & 1.75 & 2,551 & 1.03 & 0.48 & 2.21 \\
  $<$ 10th & $<$ 10th & 1,098 & 1.86 & 1.68 & 2.07 & 3,454 & 1.02 & 0.89 & 1.16 \\
  \bottomrule
\end{tabular}
\end{table}

\end{document}